\documentclass[draftcls,11pt,onecolumn,romanappendices]{IEEEtran}
\usepackage{times}
\usepackage{amsmath}  
\usepackage{amssymb} 
\usepackage{mathrsfs}
\usepackage{cite}
\usepackage{comment} 
\usepackage{upref}
\usepackage{amsfonts}
\usepackage{verbatim}
\usepackage[dvipsnames,usenames]{color}
\usepackage{enumerate}
\usepackage{graphicx}
\usepackage{subfigure}
\usepackage{latexsym}
\usepackage{bm}
\usepackage{multirow}
\usepackage{dsfont}
\usepackage{amsthm,xpatch}
\usepackage{mathtools}
\usepackage{bbm}
\usepackage{latexsym}
\usepackage{accents}
\usepackage{psfrag}
\usepackage{color}
\usepackage{times,color}
\usepackage[usenames,dvipsnames,svgnames,table]{xcolor}
\usepackage{pgf}
\usepackage{tikz}
\usetikzlibrary{arrows, automata, positioning, calc}
\usepackage{cancel} 

\newtheorem{lemma}{Lemma}
\newtheorem{theorem}{Theorem}
\newtheorem{corollary}{Corollary}
\newtheorem{remark}{Remark}

\usepackage{algorithm,algpseudocode}

\makeatletter
\newcommand{\removelatexerror}{\let\@latex@error\@gobble}
\makeatletter
\newcommand{\proofpart}[2]{%
	\par
	\addvspace{\medskipamount}%
	\noindent\emph{Part #1: #2}\par\nobreak
	\addvspace{\smallskipamount}%
	\@afterheading
}
\makeatother

\newcommand{\abs}[1]{\lvert{#1}\rvert}
\newcommand{\mbf}[1]{\mathbf{#1}}
\newcommand{\mc}[1]{\mathcal{#1}}
\newcommand{\bc}[1]{\boldsymbol{\mathcal{#1}}}
\newcommand{\mbb}[1]{\mathbb{#1}}

\raggedbottom

\usepackage[inline]{trackchanges}
\usepackage{setspace}
\doublespacing

\addeditor{alex}

\theoremstyle{definition}

\begin{document}

\pagestyle{plain}

\title{\LARGE \textbf{Multi-Server Private Linear Transformation with Joint Privacy}}

\author{\Large Fatemeh Kazemi and Alex Sprintson \\{\normalsize Dept. of ECE, Texas A\&M University, USA \big(E-mail: \{fatemeh.kazemi, spalex\}@tamu.edu\big)}\thanks{This material is based upon work supported by the National Science Foundation under Grants No.~1718658 and 1642983.}}

\maketitle 

\thispagestyle{empty}

\begin{abstract} 
This paper focuses on the Private Linear Transformation (PLT) problem in the multi-server scenario. In this problem, there are $N$ servers, each of which stores an identical copy of a database consisting of $K$ independent messages, and there is a user who wishes to compute $L$ independent linear combinations of a subset of $D$ messages in the database while leaking no information to the servers about the identity of the entire set of these $D$ messages required for the computation. We focus on the setting in which the coefficient matrix of the desired $L$ linear combinations generates a Maximum Distance Separable (MDS) code. We characterize the capacity of the PLT problem, defined as the supremum of all achievable download rates, for all parameters $N, K, D \geq 1$ and $L=1$, i.e., when the user wishes to compute one linear combination of $D$ messages. Moreover, we establish an upper bound on the capacity of PLT problem for all parameters $N, K, D, L \geq 1$, and leveraging some known capacity results, we show the tightness of this bound in the following regimes: (i) the case when there is a single server (i.e., $N=1$), (ii) the case when $L=1$, and (iii) the case when $L=D$.

\end{abstract}


\section{Introduction}\label{sec:Intro}
 
\subsection{Motivation}

This work focuses on the  Private Linear Transformation (PLT) problem, recently introduced in~\mbox{\cite{HES2021Individual,HES2021Joint}}. In the PLT problem, there are $N$ servers, each of which stores an identical copy of a database consisting of $K$ independent messages. Also, there is a user who wishes to compute $L$ independent linear combinations of a subset of $D$ messages in the database, without revealing any information to the servers about the identities of the $D$ messages required for the computation, while downloading the minimum possible amount of information from the servers.

The PLT problem can be viewed as an interesting extension of the Private Information Retrieval (PIR) (see e.g.,~\cite{chor1995and,Sun2017,shah2014one,tajeddine2018,banawan2018capacity,BU17,BU2018,HKRS2019,LG:2018,chen2020capacity,TSC2018,siavoshani2021private,kadhe2019private,HKS2020,KKHS32019,HKS2018,KKHS12019,HKS2019,KKHS22019}) and Private Linear Computation (PLC) (see e.g.,~\cite{sun2018capacity,mirmohseni2018private,HS2019,HS2020}) problems, which have been extensively studied  in the literature. To be more specific, for $L=D$, the PLT problem reduces to the multi-message PIR problem in which the goal is to privately retrieve a subset of $D$  messages in the database. Moreover, for $L=1$, the PLT problem reduces to the PLC problem in which the goal is to privately compute one linear combination of a $D$-subset of messages. The PLT problem can be motivated by several practical scenarios such as linear transformation technique applied for dimensionality reduction in Machine Learning (ML) applications~(see \cite{HES2021Joint}). 

\subsection{Previous and Related Work}

In the classical PIR problem, a user wants to download a message from a database replicated over $N$ non-colluding servers, without leaking any information about the identity of the desired message to any individual server. The capacity of the information-theoretic PIR was derived in~\cite{Sun2017}. Then, the PIR problem has been extended in various directions, such as coded PIR (see e.g.,~\cite{shah2014one,tajeddine2018,banawan2018capacity}), multi-message PIR (see e.g.,~\cite{BU17,BU2018,HKRS2019,LG:2018}), and PIR with side information (see e.g.,~\cite{kadhe2019private,HKS2020,KKHS32019,HKS2018,KKHS12019,HKS2019,KKHS22019}). 

The problem of Private Computation (PC), initially introduced in~\cite{sun2018capacity}, is an interesting generalization of the PIR problem, in which the user wishes to compute one arbitrary linear combination of the messages in the database, while revealing no information about the identities and the coefficients of these messages to any server. Several variants of the PC problem were also studied in~\cite{obead2018achievable,obead2018,tahmasebi2019private,ASK2020,obead2020private,ASK2019,mirmohseni2018private,HS2019,HS2020}. In~\cite{HS2020}, a variation of the PC problem was considered in which it is only required to protect the identities of the messages in the demanded linear combination, while the coefficients used to construct the linear combination do not need to be hidden from the server.

The most related to this work is the PLT problem, recently introduced in~\cite{HES2021Individual, HES2021Joint}, which is also closely related to the PIR and PLC problems. Indeed, a naive protocol for the PLT problem is to privately retrieve all the $D$ messages required for the computation using a multi-message PIR scheme, and then compute the required linear combinations. Another simple approach for the PLT problem is to compute each required linear combination separately using a PLC protocol. 

Although there is a significant body of  literature on the PIR and PLC problems, there are only a few studies on the PLT problem. In particular, the PLT problem was studied in the single-server setting by considering the following two privacy requirements: (i) the individual privacy, where the identity of each individual message in the support set of the demanded linear combinations needs to be kept private~\cite{HES2021Individual}; and (ii) the joint privacy, in which the identity of the entire set of messages in the support set of the demanded linear combinations must be kept private~\cite{HES2021Joint}. All variants of the PIR and PC problems, can also be considered for the PLT problem which opens several interesting directions for future work.

In~\cite{HES2021Joint}, Heidarzadeh et al. recently proved that the capacity of the PLT with a single server and joint privacy is $L/(K-D+L)$. However, the capacity of the PLT in the multi-server scenario was left as an open problem in~\cite{HES2021Joint}. Remarkably, neither a general achievability scheme nor a converse was known in this case. This work is motivated by this open problem.

\subsection{Main Contributions} 

In this paper, we consider the multi-server setting of the PLT problem with an arbitrary number of servers ${N\geq 1}$. We focus on the setting in which the coefficient matrix of the required linear combinations generates a Maximum Distance Separable (MDS) code. This setting can be motivated by several practical scenarios. For instance, the user may have chosen the the coefficient matrix randomly over the field of real numbers or a finite field of large size~\cite{HES2021Joint}. The first contribution of this work is to show that the capacity of PLT problem for the case of $L=1$, i.e., when the user wishes to compute one linear combination of $D$ messages, is equal to $\Phi(1/N,K-D+1)$, where $\Phi(A,B)={(1+A+A^2+\cdots+A^{B-1})}^{-1}$. This result establishes the capacity of the PLC problem for an arbitrary number of servers $N$, thus settling the open problem mentioned above for the case of $L=1$. Moreover, we establish an upper bound on the capacity of PLT problem for any arbitrary parameters $N, K, D, L \geq 1$, and based on some known capacity results, we show the tightness of the provided upper bound for some special cases of the problem: (i) the case where there is a single server (i.e., $N=1$), (ii) the case where $L=1$, and (iii) the case where $L=D$.

\section{Problem Formulation}\label{sec:Problem-Formulation}

\subsection{Basic Notation}

Throughout this paper, we denote random variables by bold letters and their realizations by regular letters. The functions $\mbb{P}(\cdot)$, $\mbb{P}(\cdot|\cdot)$, $\mbb{H}(\cdot)$,  $\mbb{H}(\cdot|\cdot)$, and $I(\cdot;\cdot |\cdot)$ denote probability, conditional probability, entropy, conditional entropy, and conditional mutual information, respectively. Let ${\mbb{Z}_{\geq 0}}$ and $\mbb{N}$ denote the set of non-negative integers and the set of positive integers, respectively. For any $i\in \mbb{N}$, let $[i]\triangleq \{1,\dots,i\}$. Let $\mbb{F}_q$ be a finite field for some prime $q$, $\mathbb{F}_q^{\times} \triangleq \mathbb{F}_q\setminus \{0\}$ be the multiplicative group of $\mathbb{F}_q$, and $\mbb{F}^S_{q}$ be the $S$-dimensional vector space over $\mbb{F}_q$ for some integer $S\geq 1$. Let $B\triangleq S\log_2 q$. Let $K, D, L \geq 1$ be integers such that $L\leq D\leq K$. Let $\mc{K} \triangleq [K]$. Let $\mbb{W}$ denote the set of all $D$-subsets (i.e., subsets of size $D$) $\mc{W}$ of $\mc{K}$, and $\mbb{V}$ denote the set of all MDS matrices $V$ of dimension $L \times D$ with entries in $\mbb{F}_q$ (i.e., every $L \times L$ submatrix of $V$ is full-rank). We denote the cardinality of a set $\mc{S}$ by $\abs{\mc{S}}$. For a positive real number $A$ and a positive integer number $B$, let  $\Phi (A,B)={(1+A+A^2+\cdots+A^{B-1})}^{-1}$.

\subsection{Setup and Assumptions}

Consider $N$ non-colluding servers, each stores an identical copy of a database consisting of $K$ messages, $X_{\mc{K}}=\{X_1,\dots,X_K\}$, where each message $X_i$ is a row vector of length $S$. Let $X\triangleq[X_1^\top,\cdots,X_K^\top]^\top$ be a matrix of dimension $K \times S$. For some $\mc{R} \triangleq \{i_1,\dots,i_r\}\subset\mc{K}$, let $X_\mc{R}$ be the submatrix of $X$ of size $\abs{\mc{R}}\times S$, restricted to its rows indexed by the set $\mc{R}$, i.e., $X_\mc{R}=[X_{i_1}^\top,\cdots,X_{i_r}^\top]^\top$.

Suppose that there is a user who wishes to compute $L$ linear combinations of $D$ messages $\{X_i: i \in \mc{W}\}$, as $V_1 X_\mc{W},\cdots,V_L X_\mc{W}$, where $\mc{W}\in \mbb{W}$ is the index set of the $D$ messages required for the computation, and $V_\ell$, $\ell \in [L]$, denoting the coefficient vector of the $\ell$th desired linear combination, is the $\ell$th row of an $L\times D$ MDS matrix $V$ with entries in $\mbb{F}_q$, i.e., $V=[V_1^\top, \cdots, V_L^\top]^\top$, $V\in \mbb{V}$. In other words, the user wants to compute the $L \times S$ matrix $Z^{[\mc{W},V]}\triangleq VX_\mc{W}$ whose rows are the $L$ required linear combinations. We refer to $Z^{[\mc{W},V]}$ as the \emph{demand}, $\mc{W}$ as the \emph{demand's index set}, $V$ as the \emph{demand's coefficient matrix}, $L$ as the \emph{demand's dimension}, and  $D$ as the \emph{demand's support size}.


We assume that $\mbf{X}_1,\cdots,\mbf{X}_K$ are independently and uniformly distributed over $\mbb{F}^{S}_{q}$, i.e., ${H(\mbf{X}_i) = B}$ for $i\in \mc{K}$. Thus, $H(\mbf{X}) = KB$, $H(\mbf{X}_\mc{R}) = \abs{\mc{R}}B$ for every $\mc{R} \subset\mc{K}$, and $H(\mbf{Z^{[\bc{W},V]}}) = LB$. We also assume that $\bc{W}$, $\mbf{V}$, and $\mbf{X}$ are independent random variables such that $\bc{W}$ and $\mbf{V}$ are uniformly distributed over $\mbb{W}$ and $\mbb{V}$, respectively. Moreover, we assume that the servers initially know the distributions of $\bc{W}$ and $\mbf{V}$, whereas the servers have no information about the realizations $\mc{W}$ and $V$ in advance.

\subsection{Privacy and Recoverability Conditions}

To retrieve the demand $Z^{[\mc{W},V]}$ for any given $\mc{W}$ and $V$, the user generates $N$ queries $\{Q_n^{[\mc{W},V]}\}_{n\in [N]}$, and sends the query $Q_n^{[\mc{W},V]}$ to the $n$-th server. Note that server $n$ just receives $Q_n^{[\mc{W},V]}$ without having any access to other queries (non-colluding servers assumption). Each query $Q_n^{[\mc{W},V]}$ is a (potentially stochastic) function of $\mc{W}$ and $V$. For clarity, we denote ${Q}^{[\mc{W},V]}\triangleq\{Q_n^{[\mc{W},V]}\}_{n\in [N]}$ and $\mbf{Q}^{[\bc{W},\mbf{V}]}\triangleq\{\mbf{Q}_n^{[\bc{W},\mbf{V}]}\}_{n\in [N]}$.

Once the $n$-th server receives the query $Q_n^{[\mc{W},V]}$, it responds back to the user with an answer $A_n^{[\mc{W},V]}$. The answer $A_n^{[\mc{W},V]}$ is a (deterministic) function of the query $Q_n^{[\mc{W},V]}$ and $X$, i.e.,
$H(\mbf{A}_n^{[\bc{W},\mbf{V}]}| \mbf{Q}_n^{[\bc{W},\mbf{V}]},\mbf{X}) = 0$. For clarity, we denote $A^{[\mc{W},V]}\triangleq\{A_n^{[\mc{W},V]}\}_{n\in [N]}$  and $\mbf{A}^{[\bc{W},\mbf{V}]}\triangleq\{\mbf{A}_n^{[\bc{W},\mbf{V}]}\}_{n\in [N]}$.

\emph{\textbf{Recoverability Condition}}: The answers $A^{[\mc{W},V]}$ from all the servers along with the queries ${Q}^{[\mc{W},V]}$, and the realizations $\mc{W},V$ must enable the user to retrieve the demand ${Z^{[{W},V]}}$. This condition is referred to as the \emph{recoverability condition}, as formally stated in the following
\[H(\mbf{Z^{[\bc{W},V]}}| \mbf{A}^{[\bc{W},\mbf{V}]},\mbf{Q}^{[\bc{W},\mbf{V}]}, \bc{W},\mbf{V})=0,\]

\emph{\textbf{Privacy Condition}}: The queries ${Q}^{[\mc{W},V]}$ should be designed such that the servers infer no information about the user's demand index set $\mc{W}$. This condition is referred to as the \emph{joint privacy condition}, formally stated as follows
\[I(\bc{W}; \mbf{Q}_n^{[\bc{W},\mbf{V}]},\mbf{A}_n^{[\bc{W},\mbf{V}]},\mbf{X}_{\mc{K}})=0\quad \forall n \in [N].\] 
Equivalently, from the perspective of each server, every $D$-subset of indices $\mc{K}$ must be equally likely to be the demand’s index set, i.e., for any given $\mc{\tilde{W}} \in \mbb{W}$, it must hold that
\[\mathbb{P}(\bc{W}= \mc{\tilde{W}}| \mbf{Q}_n^{[\bc{W},\mbf{V}]}= {Q}_{n}^{[\mc{W},{V}]}) = \mathbb{P}(\bc{W}= \mc{\tilde{W}})\quad \forall n \in [N].\]

\subsection{Problem Statement}
The problem is to design a protocol for generating queries $\{Q_n^{[\mc{W},V]}\}_{n\in [N]}$ and their corresponding answers $\{A_n^{[\mc{W},V]}\}_{n\in [N]}$ (for any given $\mc{W}$ and ${V}$) such that both the privacy and recoverability conditions are satisfied. We refer to this problem as \emph{Private Linear Transformation (PLT)}. A protocol for generating queries/answers for PLT is referred to as a \emph{PLT protocol}. 

The \emph{rate} of a PLT protocol is defined as the ratio of the entropy of the demand , i.e., $H(\mbf{Z^{[\bc{W},V]}}) = LB$, to the total entropy of answers from the servers, i.e., $\Sigma_{n=1}^{N}H(\mbf{A}_{n}^{[\bc{W},\mbf{V}]})$. The \emph{capacity} of the PLT problem, denoted by ${C^{PLT}(N,K,L,D)}$, is defined as the supremum of rates over all PLT protocols, i.e.,
\[C^{PLT}(N,K,L,D)\triangleq \sup \frac{LB}{\Sigma_{n=1}^{N}H(\mbf{A}_{n}^{[\bc{W},\mbf{V}]})}\]
In this work, our goal is to characterize (or derive non-trivial bounds on) the capacity of the PLT problem, i.e., ${C^{PLT}(N,K,L,D)}$, and to design a PLT protocol that is capacity-achieving.

\section{Main Results}\label{sec:Main-Results}

In this section, we present our main results. Theorem~\ref{thm:PLT} establishes an upper bound on the capacity of the PLT problem for all parameters $N, K, L, D \geq 1$. Leveraging some known capacity results, we show that the presented upper bound is tight in the following regimes: (i) the case where there is a single server (i.e., $N=1$), (ii) the case where $L=1$, and (iii) the case where $L=D$. Theorem~\ref{thm:PLC} characterizes the capacity of the PLT problem for all parameters $N, K, D \geq 1$ and $L=1$, i.e., the case where the user wishes to privately compute \emph{one} linear combination of $D$ messages in the database. The proofs of theorems~\ref{thm:PLT} and~\ref{thm:PLC} are given in sections~\ref{sec:PLT} and~\ref{sec:PLC}, respectively.

\begin{theorem}\label{thm:PLT}
The capacity of the PLT problem with $N$ non-colluding and replicated servers, $K$ messages, demand's support size $D$, and demand’s dimension $L$,   \newline 
(i) if $\frac{K-D}{L}\leq 1$, is upper bounded by 
\[{C^{PLT}(N,K,L,D)} \leq {\left(1+\frac{K-D}{LN}\right)}^{-1},\] 
(ii) and if $\frac{K-D}{L}\geq 1$, is upper bounded by
\[{C^{PLT}(N,K,L,D)} \leq {\left(\frac{1-{\left({\frac{1}{N}}\right)}^{\lfloor{\theta\rfloor}}}{1-\frac{1}{N}}+\frac{\left(\theta-{\lfloor{\theta\rfloor}}\right)}{{N}^{\lfloor{\theta\rfloor}}}\right)}^{-1}.\]where $\theta\triangleq{{\frac{K-D+L}{L}}}$.
\end{theorem}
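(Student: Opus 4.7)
My plan is to establish the bound by an information-theoretic converse that iteratively peels off the contribution of $L$ messages at a time, in the spirit of the Sun--Jafar PIR converse and its multi-message extension. First, since $C^{PLT}(N,K,L,D) = LB/\sum_{n=1}^{N}H(\mbf{A}_n^{[\bc{W},\mbf{V}]})$, proving the claimed upper bound on $C^{PLT}$ is equivalent to proving a matching lower bound on $\sum_{n}H(\mbf{A}_n^{[\bc{W},\mbf{V}]})$; by subadditivity of entropy, it suffices to bound $H(\mbf{A}^{[\bc{W},\mbf{V}]}\mid\mbf{Q}^{[\bc{W},\mbf{V}]})$ from below. Second, the joint privacy condition forces the single-server marginal distribution of $(\mbf{Q}_n^{[\mc{W},V]},\mbf{A}_n^{[\mc{W},V]})$ to be identical for every realization of the demand $(\mc{W},V)$, so any per-server entropy is demand-invariant; I will use this to substitute one demand for another inside single-server entropies without changing their value.

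The heart of the argument is a peeling recursion. I would construct a sequence of demands $\{(\mc{W}_j,V_j)\}_{j\ge 1}\subset\mbb{W}\times\mbb{V}$ whose index sets overlap in a controlled way: consecutive sets share exactly $D-L$ indices, so each new demand introduces $L$ previously uncovered messages. The MDS hypothesis on the coefficient matrices guarantees that the $L$ new linear combinations produced by the $j$-th demand are informationally independent of all previous combinations once one conditions on the covered messages. I would then derive, for each round $j$, an inequality of the form
\[
H\bigl(\mbf{A}^{[\mc{W}_j,V_j]}\bigm| \mbf{Z}^{[\mc{W}_1,V_1]},\ldots,\mbf{Z}^{[\mc{W}_{j-1},V_{j-1}]},\mbf{Q}\bigr) \ge \Delta_j B + \tfrac{1}{N}\,H\bigl(\mbf{A}^{[\mc{W}_{j+1},V_{j+1}]}\bigm| \mbf{Z}^{[\mc{W}_1,V_1]},\ldots,\mbf{Z}^{[\mc{W}_j,V_j]},\mbf{Q}\bigr),
\]
where $\Delta_j=L$ for a full round $j\le\lfloor\theta\rfloor$ and $\Delta_{\lfloor\theta\rfloor+1}=L(\theta-\lfloor\theta\rfloor)$ for the terminal partial round. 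The additive $\Delta_j B$ term is extracted from the recoverability condition: the $j$-th answers, combined with the previously recovered transforms, must determine $\Delta_j B$ new bits per symbol. The $1/N$ factor comes from combining per-server symmetry across the $N$ servers with the demand-invariance established above, which allows me to swap the next demand into the single-server residual entropy without changing it.

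Unrolling the recursion through $\lfloor\theta\rfloor$ full rounds followed by the terminal partial round yields
\[
\sum_{n=1}^{N} H(\mbf{A}_n^{[\bc{W},\mbf{V}]}) \ge LB\left(\sum_{j=0}^{\lfloor\theta\rfloor-1}\frac{1}{N^j} + \frac{\theta-\lfloor\theta\rfloor}{N^{\lfloor\theta\rfloor}}\right) = LB\left(\frac{1-(1/N)^{\lfloor\theta\rfloor}}{1-1/N} + \frac{\theta-\lfloor\theta\rfloor}{N^{\lfloor\theta\rfloor}}\right),
\]
which inverts to the claimed bound in case (ii). In case (i) one has $(K-D)/L\le 1$, i.e., $\theta\le 2$; evaluating the same expression at $\lfloor\theta\rfloor=1$ gives $1+(\theta-1)/N = 1+(K-D)/(LN)$, exactly recovering the bound of case (i). Thus a single unified recursion delivers both cases.

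The main obstacle is the third step: constructing the sequence $\{(\mc{W}_j,V_j)\}\subset\mbb{W}\times\mbb{V}$ and formalizing the recursion so that exactly $\Delta_j B$ new bits can be extracted at each round. The MDS hypothesis is indispensable here -- it ensures that any $L$ columns of each $V_j$ form a full-rank $L\times L$ submatrix, so that $L$ freshly introduced messages always produce $L$ linearly independent new combinations even when mixed with the $D-L$ reused messages. The most delicate bookkeeping lies in the terminal partial round, where only $L(\theta-\lfloor\theta\rfloor)<L$ uncovered messages remain: the last demand must be decomposed into a ``covered'' part that contributes no new entropy and an ``uncovered'' part that contributes exactly $L(\theta-\lfloor\theta\rfloor)B$ fresh bits, which is precisely what generates the $(\theta-\lfloor\theta\rfloor)/N^{\lfloor\theta\rfloor}$ correction term.
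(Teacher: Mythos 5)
Your plan is a from-scratch Sun--Jafar-style peeling converse, which is a genuinely different route from the paper (the paper proves Theorem~\ref{thm:PLT} by a short reduction: any PLT protocol yields a protocol for multi-message PIR with private side information with $P=L$, $M=D-L$ --- the MDS property of $V$ gives both recoverability after subtracting the side information and privacy of the demand/side-information partition --- so $C^{PLT}(N,K,L,D)\leq C^{MPIR-PSI}(N,K,L,D-L)$ and the bounds follow from the known results of~\cite{siavoshani2021private}). However, as written your sketch has two genuine gaps. First, your symmetrization step is based on the claim that joint privacy forces the per-server statistics of $(\mbf{Q}_n,\mbf{A}_n)$ to be identical for every realization of the pair $(\mc{W},V)$. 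The privacy condition in this problem only hides the index set $\bc{W}$; the coefficient matrix $\mbf{V}$ need not be hidden (this is exactly why the PLT capacity exceeds the private-computation capacity of~\cite{sun2018capacity}). A valid protocol may reveal $V$ outright, so the distribution of $(\mbf{Q}_n,\mbf{A}_n)$ can depend on $V$, and your swap of $(\mc{W}_j,V_j)$ for $(\mc{W}_{j+1},V_{j+1})$ inside single-server entropies --- the step that produces the $1/N$ factor --- is not justified by the stated privacy condition. Repairing it requires either averaging over $\mbf{V}$ or an argument that, within each $D$-subset, every $L$-subset is equally plausible as the ``new'' part (which is essentially the MDS-based observation the paper uses to pass to the M-PIR-PSI problem, where the needed symmetry is available).

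Second, the core recursion is asserted rather than proved: you state, but do not establish, that each round extracts exactly $\Delta_j B$ fresh bits, and the terminal partial round with $\Delta=L(\theta-\lfloor\theta\rfloor)$ is precisely the technically hard part. The fractional-round bookkeeping is nontrivial even in the literature you are implicitly following --- in the M-PIR-PSI setting the non-integer case of~\cite{siavoshani2021private} yields only an upper bound via a delicate argument, and its exact capacity is open --- so declaring that the last demand ``must be decomposed'' into covered and uncovered parts contributing the right fractional entropy is a statement that needs a concrete lemma, not an outline. Without (a) a correct symmetry lemma consistent with $\mbf{V}$ being unprotected and (b) a proof of the per-round inequality including the fractional round, the proposal does not yet constitute a proof of Theorem~\ref{thm:PLT}; the paper's reduction avoids both difficulties by inheriting them from the already-proved M-PIR-PSI converse.
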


The converse proof is provided in Section~\ref{subsec:PLT:Converse}, which is based on a reduction argument and leverages the capacity result for multi-message PIR with private side information problem, introduced in~\cite{siavoshani2021private}. 


\begin{corollary}\label{cor:PLT}
If $\frac{K-D}{L}\in {\mbb{Z}_{\geq 0}}$, the capacity upper bounds provided in Theorem~\ref{thm:PLT}, can be written as\[{C^{PLT}(N,K,L,D)} \leq \left(1+\frac{1}{N}+\dots+\frac{1}{N^{\frac{K-D}{L}}}\right)^{-1}=\Phi(\frac{1}{N},\frac{K-D+L}{L}).\]
\end{corollary}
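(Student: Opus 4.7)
The plan is to show that the two piecewise expressions in Theorem~\ref{thm:PLT} both collapse to the same geometric-series form when $m\triangleq (K-D)/L$ is a non-negative integer. Set $m=(K-D)/L\in\mathbb{Z}_{\geq 0}$, so that $\theta=(K-D+L)/L=m+1$ is also a non-negative integer, and hence $\lfloor\theta\rfloor=\theta$ and $\theta-\lfloor\theta\rfloor=0$.

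First I would handle the case $(K-D)/L\leq 1$, i.e.\ $m\in\{0,1\}$. For $m=0$, the bound from Theorem~\ref{thm:PLT}(i) is trivially $1$, which equals $\Phi(1/N,1)$. For $m=1$, the bound is $(1+1/N)^{-1}$, matching $\Phi(1/N,2)$. Next I would handle the case $(K-D)/L\geq 1$, i.e.\ $m\geq 1$. Because $\theta=m+1$ is an integer, the fractional-part term $(\theta-\lfloor\theta\rfloor)/N^{\lfloor\theta\rfloor}$ in Theorem~\ref{thm:PLT}(ii) vanishes, and the first term becomes
\[
\frac{1-(1/N)^{m+1}}{1-1/N}.
\]
Recognising this as the partial sum of a geometric series,
\[
\frac{1-(1/N)^{m+1}}{1-1/N}=\sum_{j=0}^{m}\frac{1}{N^{j}}=1+\frac{1}{N}+\cdots+\frac{1}{N^{m}},
\]
yields
\[
C^{PLT}(N,K,L,D)\leq\left(1+\frac{1}{N}+\cdots+\frac{1}{N^{m}}\right)^{-1}=\Phi\!\left(\frac{1}{N},m+1\right)=\Phi\!\left(\frac{1}{N},\frac{K-D+L}{L}\right),
\]
as claimed. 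Finally I would verify consistency at the overlap $m=1$: both (i) and (ii) yield $(1+1/N)^{-1}=\Phi(1/N,2)$, so the corollary is unambiguous.

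There is really no obstacle here: the statement is just a closed-form simplification, and the only observation required is that integrality of $\theta$ kills the fractional-part correction term in Theorem~\ref{thm:PLT}(ii), after which the geometric-series identity finishes the proof. The one small bookkeeping point to be careful about is that Theorem~\ref{thm:PLT}(i) is stated for $(K-D)/L\leq 1$ and (ii) for $(K-D)/L\geq 1$, so the integer cases $m=0$ and $m=1$ should be checked against (i), while $m\geq 1$ should be checked against (ii), with $m=1$ consistent under both.
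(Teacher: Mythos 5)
Your proof is correct and is essentially the argument the paper intends: when $(K-D)/L$ is a non-negative integer the fractional-part term in Theorem~\ref{thm:PLT}(ii) vanishes and the remaining expression is the geometric partial sum $1+1/N+\cdots+1/N^{(K-D)/L}$, while the small integer cases $(K-D)/L\in\{0,1\}$ covered by part (i) agree with the same formula (the paper reaches the identical conclusion by also citing the exact integer-ratio capacity of the M-PIR-PSI problem, \cite[Corollary~1]{siavoshani2021private}, which is an equivalent shortcut). Your explicit check of the overlap at $(K-D)/L=1$ is a sensible extra detail and nothing is missing.
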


\begin{remark}\label{rem:1}
\emph{The capacity upper bounds in Theorem~\ref{thm:PLT} are tight for the case when $N=1$ (i.e., when there is a single server), which is equal to $L/(K-D+L)$ as was shown in~\cite[Theorem~2]{HES2021Joint}. Moreover, in Theorem~\ref{thm:PLC}, we prove the tightness of this upper bound for the case of $L=1$.}
\end{remark}

\begin{remark}\label{rem:2}
\emph{Notably, for the case of $L=D$, where the user wishes to privately compute $D$ independent linear combinations of $D$-subset of messages in the database (which is equivalent to privately retrieving these $D$ messages), the capacity upper bound in Theorem~\ref{thm:PLT}, i.e., (i) ${{\left(1+{(K-D)}/{DN}\right)}^{-1}}$ if ${K/D \leq 2}$, and (ii) $\Phi(1/N,K/D)$ if $K/D \geq 2$ and $K/D \in \mbb{N}$, is tight as was shown in~\cite{BU2018}. Note that in this case, an optimal capacity-achieving multi-message PIR protocol proposed in~\cite[Theorems~1,~2]{BU2018} is an optimal protocol that achieves the capacity upper bound in Theorem~\ref{thm:PLT}.}
\end{remark}

\begin{theorem}\label{thm:PLC}
The capacity of the PLT problem with $N$ non-colluding and replicated servers, $K$ messages, demand's support size $D$, and demand’s dimension $L=1$,  is given by
\[{C^{PLT}(N,K,1,D)} = \left(1+\frac{1}{N}+\dots+\frac{1}{N^{K-D}}\right)^{-1}=\Phi\left(\frac{1}{N},K-D+1\right).\]
\end{theorem}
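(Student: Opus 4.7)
The plan is to prove Theorem~\ref{thm:PLC} by establishing matching converse and achievability bounds on the capacity.

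For the \emph{converse}, I would specialize Corollary~\ref{cor:PLT} to $L=1$. Since $(K-D)/L = K-D \in \mathbb{Z}_{\geq 0}$, the hypothesis of the corollary is satisfied and yields
\[C^{PLT}(N,K,1,D) \leq \left(1+\frac{1}{N}+\cdots+\frac{1}{N^{K-D}}\right)^{-1} = \Phi\!\left(\tfrac{1}{N},\,K-D+1\right).\]
Equivalently, in Theorem~\ref{thm:PLT}(ii) we have $\theta = K-D+1 \in \mathbb{N}$ when $K>D$, so the fractional term vanishes and the geometric sum collapses to this expression; in the boundary case $K=D$, Theorem~\ref{thm:PLT}(i) yields the trivial bound $1 = \Phi(1/N,1)$.

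For the \emph{achievability}, the strategy is to reduce to classical PIR with $K-D+1$ independent messages over $N$ servers and invoke the Sun--Jafar capacity-achieving PIR scheme~\cite{Sun2017}, whose rate is exactly $\Phi(1/N,K-D+1)$. Given the realized demand $(\mathcal{W}, V_1)$, I would define virtual messages $\tilde{X}_0 \triangleq V_1 X_\mathcal{W}$ and $\tilde{X}_j \triangleq X_{i_j}$ for $j \in [K-D]$, where $i_1 < \cdots < i_{K-D}$ enumerate $[K] \setminus \mathcal{W}$. Because $V_1 \in (\mathbb{F}_q^\times)^D$ (MDS with $L=1$ forces every entry to be nonzero) and the $X_i$'s are i.i.d.\ uniform on $\mathbb{F}_q^S$, the virtual messages $\tilde{X}_0, \ldots, \tilde{X}_{K-D}$ are mutually independent and each uniform on $\mathbb{F}_q^S$, hence form a valid $(K-D+1)$-message PIR database. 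The user runs the Sun--Jafar protocol to privately retrieve $\tilde{X}_0$, translating each virtual-message query back into an equivalent linear combination over $X_1,\ldots,X_K$ before transmitting it to the servers. Recoverability of $\tilde{X}_0 = V_1 X_\mathcal{W}$ is immediate from PIR correctness, and the download rate equals the PIR capacity $\Phi(1/N,K-D+1)$, matching the converse.

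The main obstacle is verifying the \emph{joint privacy} condition
\[\mathbb{P}\!\left(\bc{W} = \tilde{\mathcal{W}} \,\middle|\, \mbf{Q}_n^{[\bc{W},\mbf{V}]} = Q_n^{[\mathcal{W},V]}\right) = \mathbb{P}(\bc{W} = \tilde{\mathcal{W}}) \qquad \forall\,\tilde{\mathcal{W}} \in \mathbb{W},\ \forall\,n \in [N],\]
since a naive translation leaks $\mathcal{W}$ through the sparsity pattern of coefficients (positions outside $\mathcal{W}$ receive the Sun--Jafar $\{0,1\}$-coefficients while positions inside receive random nonzero coefficients inherited from $V_1$). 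To fix this, I would augment the protocol with additional user-side randomness: a uniformly random relabeling that permutes which virtual message plays the role of the demanded one inside the PIR subroutine (exploiting Sun--Jafar's built-in message symmetry), independent uniform multiplicative masks in $\mathbb{F}_q^\times$ applied to each virtual message, and the symbol-permutation randomness already present in Sun--Jafar. Combined with the uniformity of $\mbf{V}$ over $\mathbb{V}$, I would then exhibit, for each candidate $\tilde{\mathcal{W}}$ and each server $n$, an explicit measure-preserving bijection between the randomness choices of the $(\mathcal{W},V_1)$-run and those of a $(\tilde{\mathcal{W}},\tilde{V}_1)$-run that yield exactly the same query, thereby establishing that the marginal law of $\mbf{Q}_n^{[\bc{W},\mbf{V}]}$ is independent of $\bc{W}$. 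Making this symmetry argument fully rigorous is the hardest technical step; once done, the joint privacy condition follows and the theorem is proved.
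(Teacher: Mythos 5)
Your converse is correct and is essentially the paper's: the paper likewise obtains the upper bound $\Phi(1/N,K-D+1)$ by specializing Theorem~\ref{thm:PLT} (equivalently Corollary~\ref{cor:PLT}) to $L=1$, and it also records an alternative reduction to the PIR--PSI capacity of~\cite{chen2020capacity}. The gap is in your achievability argument, and it is exactly at the step you flag as hardest: it cannot be closed by the randomizations you propose. In your scheme the server must be told how to form the virtual messages (or, equivalently, each translated query symbol is a linear combination of database symbols), so the server sees a coefficient structure consisting of one $D$-sparse vector supported exactly on $\mc{W}$ together with singletons supported on $[K]\setminus\mc{W}$. A random relabeling of which virtual message is ``demanded'' inside the PIR subroutine, multiplicative masks in $\mathbb{F}_q^{\times}$, and symbol permutations all leave this support pattern unchanged; hence the server identifies $\mc{W}$ with certainty (by the protocol's structure, the unique non-singleton combination is the demand), the posterior $\mathbb{P}(\bc{W}=\mc{W}\mid \mbf{Q}_n^{[\bc{W},\mbf{V}]})$ is degenerate, and joint privacy fails. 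In particular, the measure-preserving bijection you hope to exhibit cannot exist: a $(\tilde{\mc{W}},\tilde{V}_1)$-run with $\tilde{\mc{W}}\neq\mc{W}$ can never produce the same query, because the query itself pins down the support of the dense combination. Sun--Jafar's message symmetry only hides \emph{which} virtual index is requested among structurally interchangeable messages; here the demanded virtual message is structurally distinguishable.

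The paper's achievability avoids this by never exposing a sparse structure tied to $\mc{W}$. It builds $r=K-D+1$ GRS-coded super-messages $\hat{X}_1,\dots,\hat{X}_r$ that are combinations of \emph{all} $K$ messages, with coefficients $\alpha_j=v_j/p(\omega_j)$ on $\mc{W}$ and random nonzero $\alpha_j$ elsewhere, so that (by the Specialized GRS structure of~\cite{HKS2019}) for \emph{every} $D$-subset $W_f$ there is exactly one function $Y_f$ in the span of the super-messages whose support is exactly $W_f$; the user then retrieves the single function $Y_{f^*}$ supported on $\mc{W}$ using the private computation scheme of~\cite{sun2018capacity} over the $r$ independent super-messages and the $F=\binom{K}{D}$ dependent functions. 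Privacy holds because every $D$-subset corresponds to exactly one candidate function and the PC scheme hides which function is requested, while the rate is still $\Phi(1/N,K-D+1)$ since the PC rate depends only on the number of independent super-messages. If you try to repair your construction so that all virtual messages are dense and every $D$-subset is represented symmetrically, you are led essentially to this GRS-plus-PC construction; as written, your plain-PIR reduction does not yield a valid PLT protocol.
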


The converse proof follows directly from the result of Theorem~\ref{thm:PLT} for $L=1$. Also, an alternative proof of converse, similar to that of Theorem~\ref{thm:PLT}, is provided in Section~\ref{sec:PLC}. For the achievability proof, we design a PLT protocol that achieves the proposed upper bound provided by converse, and is inspired by both our recently proposed scheme of~\cite{HKS2019} for the single-server PIR with private coded side information problem, and the scheme proposed in~\cite{sun2018capacity} for the private computation problem.

\begin{remark}\label{rem:3}
\emph{The result of Theorem~\ref{thm:PLC} generalizes the previous finding reported in~\cite{HES2021Joint} for the PLT problem with a single server, without any prior side information, when joint privacy is required, and $L=1$. As was shown in~\cite{HES2021Joint}, the capacity of this setting is equal to $K-D+1$, which is consistent with the result of Theorem~\ref{thm:PLC} for $N=1$. Also, evidently it can be observed that for the case of $D=1$, the result of Theorem~\ref{thm:PLC} reduces to the known capacity result of~\cite{Sun2017} for the classical PIR problem where the user wants to privately download one message in the database, which is $\Phi\left(1/N,K\right)$}.
\end{remark}

\begin{remark}\label{rem:4}
\emph{It is worthwhile to compare the result of Theorem~\ref{thm:PLC} with the capacity result of~\cite{sun2018capacity} for the related PC problem where the user wishes to compute one arbitrary linear combination of $K$ independent messages in a database replicated at $N$ non-colluding servers, while hiding both the identities and the coefficients of the messages participating in the demand. As was shown in~\cite{sun2018capacity}, the capacity of this setting is equal to $\Phi\left(1/N,K\right)$. Unlike the privacy requirements in the private computation problem introduced in~\cite{sun2018capacity}, in the PLT problem, the goal is to hide only the identities of the $D$ messages participating in the user's demand and not necessarily the values of their coefficients, which based on the result of Theorem~\ref{thm:PLC}, it can be fulfilled more efficiently with much higher rate, i.e., $\Phi\left(1/N,K-D+1\right)$. This is interesting since this type of access privacy are motivated by many practical scenarios such as linear transformation technique used for dimensionality reduction in Machine Learning (ML) applications (see, e.g.~\cite{HES2021Joint,bingham2001random} and references therein). By comparing the capacity results of these two problems, one can readily conclude that the advantage of PLT protocols over the a repeated use of a PC protocol becomes more tangible when the demand's support size $D$ increases.}
\end{remark}

\begin{remark}\label{rem:5}
\emph{It is noteworthy that for\footnote{Note that for the case of $D=1$, the PLT problem reduces to the classical single-message PIR problem introduced in~\cite{Sun2017}.} $D \geq 2$, a trivial PLT protocol for $L=1$ would be privately retrieving the $D$ messages required for the linear computation using an optimal multi-message PIR scheme satisfying privacy of demand messages jointly, introduced in~\cite{BU2018}, and then computing the required linear combination. As was shown in~\cite[Theorems~1,~2]{BU2018}, the optimal rate that can be achieved leveraging this approach, is upper bounded by $D^{-1}\leq 1/2$. The result of Theorem~\ref{thm:PLC} indicates that the PLT problem in general can be addressed much more efficiently with the rate of $\Phi(1/N,K-D+1)\geq 1/2$.}
\end{remark}

\begin{remark}\label{rem:6}
\emph{Interestingly, in the PLT problem, a simple approach of computing each of the required linear combinations separately through applying an optimal PLT scheme introduced in Theorem~\ref{thm:PLC}, cannot achieve the capacity upper bound presented in Theorem~\ref{thm:PLT} for all parameters $N, K, L, D$.}
\end{remark}

\section{Proof of Theorem~\ref{thm:PLT}}\label{sec:PLT}
\subsection{Converse proof}\label{subsec:PLT:Converse}
The proof of converse follows from the capacity result for the problem of multi-message PIR with private side information, referred to as M-PIR-PSI, introduced in~\cite[Theorem~1]{siavoshani2021private}. In this problem, there is a database of $K$ independent messages whose copies are replicated across $N$ servers, and there is a user who has access to $M$ messages from the database as side information. The user wishes to retrieve $P$ messages from the database while leaking no information about the the identities of both the desired messages and the side information messages, to any individual server. As was shown in~\cite[Theorem~1]{siavoshani2021private}, the capacity of this setting, denoted by $C^{MPIR-PSI}(N,K,P,M)$,\newline
(i) if $\frac{K-M}{P}\leq 2$ is given by
\begin{equation}\label{upp:(i)}
{C^{MPIR-PSI}(N,K,P,M)} = {\left(1+\frac{K-M-P}{PN}\right)}^{-1},
\end{equation}
(ii) if $\frac{K-M}{P}\geq 2$ is upper bounded by
\begin{equation}\label{upp:(ii)}
{C^{MPIR-PSI}(N,K,P,M)} \leq {\left(\frac{1-{\left({\frac{1}{N}}\right)}^{\lfloor{\rho\rfloor}}}{1-\frac{1}{N}}+\frac{\left(\rho-{\lfloor{\rho\rfloor}}\right)}{{N}^{\lfloor{\rho\rfloor}}}\right)}^{-1},
\end{equation}
where $\rho\triangleq{{\frac{K-M}{P}}}$. In case (ii), as was shown~\cite[Corollary~1]{siavoshani2021private}, if $\tfrac{K-M}{P} \in \mbb{N}$, the capacity is given by
\begin{equation}\label{upp:(ii-1)}
C^{MPIR-PSI}(N,K,P,M)=\Phi(\frac{1}{N},\frac{K-M}{P}).
\end{equation}

In the following, we want to show that any PLT protocol designed for the problem with $N$ servers, $K$ messages, demand's support size $D$, and demand’s dimension $L$, can be used as a protocol that satisfies both the recoverability and the privacy conditions of the M-PIR-PSI problem with demand size ${P=L}$ and side information size ${M=D-L}$. Specifically, for a given instance of the M-PIR-PSI problem with the set of demand indices $\mc{P}$ of size $L$, (i.e., $P=L$), and the set of side information indices $\mc{S}$ of size $D-L$ , (i.e., ${M=D-L}$), the user can  construct a random $L\times D$ MDS matrix $V$ and forms the set $\mc{W}=\mc{P}\cup\mc{S}$. Then, for the given $\mc{W}$ and $V$, the user and the servers can apply a PLT protocol for generating queries $Q^{[\mc{W},V]}$ and their corresponding answers $A^{[\mc{W},V]}$, such that the user can privately compute $L$ MDS coded linear combinations of the $D$ messages indexed by the set $\mc{W}$ (i.e., union of demands and side information messages). The user can then retrieve the $L$ desired messages by subtracting off the contribution of the $D-L$ side information messages from the computed $L$ linear combinations. 


Now, we need to prove that the PLT-based protocol described above satisfies both the recoverability and the joint privacy conditions of the M-PIR-PSI problem. It should be noted that since the PLT protocol enables the user to compute $L$ MDS coded linear combinations of $D$ messages, based on the property of MDS codes\footnote{Every $L \times L$ submatrix of an $L \times D$ MDS matrix is invertible.}, one can readily verify that the user can always retrieve the $L$ desired messages by subtracting off the contribution of $D-L$ side information messages from the $L$ computed linear equations, and solving the resulting system of $L$ linear equations with $L$ unknowns. Thus, the recoverability condition is satisfied. 

It is easy to verify that by applying the PLT protocol, the identities of all the $D$ messages (i.e., the union of the demand messages and side information messages) participating in the $L$ linear combinations, will be jointly protected from each server as a result of the privacy guarantees of the PLT protocol. Indeed, from the perspective of each server, every $D$-subset of $K$ messages is equally likely to be the union of the demand messages and side information messages. Moreover, due to the property of MDS codes, within each $D$-subset of messages, every subset of size $L$ can be considered as the set of demand messages (i.e., the remaining $D-L$ as the set of side information messages) with equal probability. This ensures that the described PLT-based protocol satisfies the privacy condition in the M-PIR-PSI problem.

Thus, we conclude that any achievable rates in the PLT problem with $N$ servers, $K$ messages, demand's support size $D$, and demand’s dimension $L$, would be also achievable (using the PLT-based protocol) in the M-PIR-PSI problem with $N$ servers, $K$ messages, demand size ${P=L}$, and side information size ${M=D-L}$. Thus, the capacity of PLT problem with parameters $N, K, D, L$, i.e., $C^{PLT}(N,K,L,D)$, is upper bounded by the capacity of the M-PIR-PSI problem with parameters $N, K , P=L, M=D-L$, i.e., ${C^{MPIR-PSI}(N,K,L,D-L)}$. Thus, substituting $P$ with $L$, and $M$ with $D-L$ in equations~\ref{upp:(i)},~\ref{upp:(ii)} completes the proof. Also, in case (ii), if ${\tfrac{K-M}{P}=\tfrac{K-D+L}{L} \in \mbb{N}}$ or equivalently  $\tfrac{K-D}{L} \in \mbb{Z}_{\geq 0}$, we have
\[C^{PLT}(N,K,L,D) \leq C^{MPIR-PSI}(N,K,L,D-L)=\Phi(\frac{1}{N},\frac{K-D+L}{L}).\]

\section{Proof of Theorem~\ref{thm:PLC}}\label{sec:PLC}

Here, we prove the converse by showing that the capacity for the case of $L=1$, i.e., ${C^{PLT}(N,K,1,D)}$, is upper bounded by the capacity of PIR with private side information problem, referred to as PIR-PSI, in which a database of $K$ independent messages is replicated across $N$ servers, and the user has access to $M$ messages from the database as side information. The user wants to retrieve one message from the database while hiding jointly the identities of the desired message and the side information messages, from any individual server. This problem was introduced by Chen et al.~\cite{chen2020capacity}. As was shown in~\cite[Theorem~1]{chen2020capacity}, the capacity of PIR-PSI problem, denoted by ${C^{PIR-PSI}(N,K,M)}$, is equal to $\Phi(\frac{1}{N},{K-M})$. 

Any PLT protocol designed for the problem with $N$ servers, $K$ messages, demand's support size $D$, and demand’s dimension $L=1$, enables the user to compute one linear combination of a subset of $D$ messages while hiding the identities of these messages from any server. So, based on a similar reasoning used in the converse proof of Theorem~\ref{thm:PLT}, one can easily confirm that such PLT protocol would also be a protocol satisfying the recoverability and the privacy conditions in the PIR-PSI problem with side information size ${M=D-1}$. Thus, any achievable rate in the PLT problem with $N$ servers, $K$ messages, demand's support size $D$, and demand’s dimension $L=1$, can be also achieved for the PIR-PSI problem with $N$ servers, $K$ messages, and side information size ${M=D-1}$. Thus, we have
\[C^{PLT}(N,K,1,D) \leq C^{PIR-PSI}(N,K,D-1)=\Phi(\frac{1}{N},{K-D+1}).\]
\subsection{Achievability proof}\label{subsec:PLC:Achievability}
In this section, we complete the proof of Theorem~\ref{thm:PLC} by designing a PLT protocol for the setting with $N$ servers, $K$ messages, demand's support size $D$, and demand’s dimension ${L=1}$, such that it achieves the upper bound provided by converse on the rate of any such PLT protocols, i.e., ${\Phi(1/N,K-D+1)}$. The proposed protocol, referred to as the \emph{Modified GRS Code}, leverages ideas from a modified version of the Specialized GRS Code Protocol proposed for the problem of single-server PIR with private coded side information in~\cite{HKS2019}, as well as the PC scheme proposed for the PC problem in~\cite{sun2018capacity}. 

\textbf{Modified GRS Code protocol:} Assume $q\geq K$, and let each message consists of $S= N^{\binom{K}{D}}$ symbols from $\mathbb{F}_q$. Suppose the user wishes to privately compute one linear combination of $D$ messages indexed by a set $\mc{W}$, as $V_1 X_\mc{W}=\sum_{i\in \mc{W}} v_i X_i$ where $V_1$ is a row vector of length $D$. This protocol consists of four steps as follows:

\textit{\textbf{Step 1:}} By using the Modified Specialized GRS Code protocol proposed in~\cite{HKS2019}, the user first constructs a polynomial ${p(x) = \sum_{i=0}^{K-D} p_i x^i \triangleq \prod_{i\not\in \mc{W}} (x-\omega_i)}$ where $\omega_1,\dots,\omega_K$ are $K$ distinct arbitrarily chosen elements from $\mathbb{F}_q$. The user then constructs $r \triangleq K-D+1$ vectors ${{Q}_1,\dots,{Q}_{r}}$, each of length $K$, such that ${{Q}_{i} = [\alpha_1\omega_1^{i-1},\dots,\alpha_K\omega_K^{i-1}]}$, $i\in [r]$, where ${\alpha_j=\frac{v_j}{p(\omega_j)}}$ for any ${j\in \mc{W}}$, and $\alpha_j$ is chosen randomly from $\mathbb{F}^{\times}_q$ for any $j\not\in \mc{W}$.

\textit{\textbf{Step 2:}} Let $\hat{X}_i\triangleq \sum_{j=1}^{K} \alpha_j\omega_j^{i-1} X_{j}$ for $i\in [r]$. We refer to $\hat{X}_i$ as a \emph{super-message}. Note that the vector ${Q}_i$, constructed in Step 1, is the vector of coefficients of the messages $\{X_i\}_{i\in \mc{K}}$ in the super-message $\hat{X}_i$. Let $F\triangleq\binom{K}{D}$, and let $W_1,W_2,\dots,W_F$ be the collection of all $D$-subsets of $\mc{K}$ in a lexicographical order. The structure of the Specialized GRS Code protocol~\cite{HKS2019} ensures that for each $W_f$, $f \in [F]$, there exist exactly $q-1$ linear combinations $Y^1_f,Y^2_f,\dots,Y^{q-1}_f$ of the messages $\{X_i\}_{i \in W_f}$ with (non-zero) coefficients from $\mathbb{F}^{\times}_q$, such that for every $k\in [q-1]$, $Y^k_f$ can be written as a linear combination of the super-messages $\hat{X}_1,\dots,\hat{X}_r$. Let ${\beta}^{k}_f\triangleq [\beta^{k}_{f,1},\dots,\beta^{k}_{f,r}]$ be a vector of length $r$ such that $Y^{k}_f = \sum_{i=1}^{r} \beta^{k}_{f,i} \hat{X}_i$. It should be noted that, for each $f \in [F]$, $Y^1_f,Y^2_f,\dots,Y^{q-1}_f$ are the same up to a scalar multiple, i.e., for each $k\in [q-1]$, $Y^{k}_f = \delta_k Y^{1}_f$, or equivalently, ${\beta}^{k}_f = \delta_k {\beta}^{1}_f$, for some distinct $\delta_k\in \mathbb{F}^{\times}_q$. The user then constructs $F$ vectors ${\beta}_1,\dots,{\beta}_F$, each of length $r$, such that ${\beta}_f = {\beta}^{k_f}_f$ for $f \in [F]$, is chosen arbitrarily from the set of vectors $\{{\beta}^{k}_f\}_{k\in [q-1]}$. Let $Y_f\triangleq Y^{k_f}_f$ for $f\in [F]$. Each $Y_f$ is referred to as a (linear) \emph{function}. Note that ${\beta}_f$ is the vector of coefficients of the super-messages $\{\hat{X}_i\}_{i\in [r]}$ in the function $Y_f$.

\textit{\textbf{Step 3:}} The user then sends to all servers the vectors ${Q}_1,\dots,{Q}_{r}$, associated with the super-messages $\hat{X}_1,\dots,\hat{X}_r$, and the vectors ${\beta}_1,\dots,{\beta}_{F}$, associated with the functions $Y_1,\dots,Y_F$.

\textit{\textbf{Step 4:}} Then, the user and the servers leverage the PC scheme of~\cite{sun2018capacity} with $r$ (independent) messages and $F$ (linear) functions of these messages such that the user can privately retrieve one of these functions. Indeed, the $r=K-D+1$ super-messages $\{\hat{X}_i\}_{i\in [r]}$ and the $F$ functions $\{Y_f\}_{f\in [F]}$, respectively, play the role of the original messages and the functions in the PC scheme, and the user is interested in retrieving the function $Y_{f^{*}}$ privately, where $Y_{f^{*}}$ is a linear combination with non-zero coefficients of the messages $\{X_i\}_{i \in \mc{W}}$. Note that by construction, there exists only one function $Y_{f^{*}}$ among $Y_1,\dots,Y_F$ such that $Y_{f^{*}}$ is a linear combination (with only non-zero coefficients) of the messages $\{X_i\}_{i \in \mc{W}}$, and the user's demand is an scalar multiple of $Y_{f^{*}}$. More specifically, each server first constructs the super-messages $\{\hat{X}_i\}_{i\in [r]}$ by using the coefficient vectors $\{{Q}_i\}_{i\in [r]}$ as described in Step~2, and then constructs the functions $\{Y_f\}_{f\in [F]}$ by utilizing the super-messages $\{\hat{X}_i\}_{i\in [r]}$ and the coefficient vectors $\{{\beta}_f\}_{f\in [F]}$ as explained in Step~2. Note that each function $Y_f$ for $f\in [F]$ consists of $S=N^F$ symbols (from $\mbb{F}_q$) where $N$ is the number of servers. Then, each server sends to the user $S({{1}/{N}+{1}/{N^2}+\dots+{1}/{N^{K-D+1}}})$ carefully designed linear combinations of all symbols associated with all functions $\{Y_f\}_{f\in [F]}$. The details of the design of the user's query to each server and each server's transmitted linear combinations (which also depend on the query of the user) can be found in~\cite[Section~4]{sun2018capacity}.

\textbf{\textbf{Example 1.}} (Modified GRS Code protocol) Assume that $K=4$ independent messages from $\mbb{F}^{16}_{5}$ are replicated over $N = 2$ servers, and the user wishes to compute one linear combination of $D=3$ messages as $2X_1+X_2+X_3$, i.e., $\mc{W} = \{1,2,3\}$ and $V_1=[2,1,1]$ (i.e., $v_1=2$, $v_2=1$, and $v_3=1$). Note that each message consists of ${S= N^{\binom{K}{D}}= 16}$ symbols from $\mathbb{F}_{5}$.

First, the user chooses ${K=4}$ distinct elements $\omega_1,\dots,\omega_4$ from $\mathbb{F}_5$. Suppose that the user picks $\omega_1=0$, $\omega_2=1$, $\omega_3=2$, $\omega_4=3$, and then constructs the polynomial ${{p(x) = \prod_{i\not\in \mc{W}} (x-\omega_i)}=x-\omega_4=x-3}$. Then, the user computes $\alpha_j$ for $j\in \mc{W}$, as follows; $\alpha_1=\frac{v_1}{p(\omega_1)}=1$, $\alpha_2=\frac{v_2}{p(\omega_2)}=2$ and $\alpha_3=\frac{v_3}{p(\omega_3)}=4$, and chooses $\alpha_j$ for $j\not\in \mc{W}$, i.e., $\alpha_4$, randomly from $\mathbb{F}^{\times}_5$. Assume that the user chooses $\alpha_4=2$. 

Then, the user constructs ${r=K-D+1=2}$ vectors ${Q}_1$ and ${Q}_2$, each of length ${K=4}$, such that ${{Q}_i=[\alpha_1\omega_1^{i-1},\dots,\alpha_K\omega_K^{i-1}]}$ for $i\in \{1,2\}$, i.e., the user constructs ${{Q}_1=[1,2,4,2]}$ and ${{Q}_2=[0,2,3,1]}$. Note that for the set $W_1=\{1,2,3\}$, there exist exactly ${q-1=4}$ vectors ${{\beta}^{k}_1 = [2k,k]}$ for ${k\in [4]}$ such that ${2k{Q}_1+k{Q}_2=k[2,1,1,0]}$. 

Then, the user arbitrarily chooses the vector ${{\beta}_1}$ from the set of vectors $\{{\beta}^{k}_1 = [2k,k]\}_{k \in [4]}$. Suppose that the user chooses ${{\beta}_1={\beta}^{2}_1=[4,2]}$. Similarly, the user picks the vectors ${{\beta}_2=[3,1]}$, ${{\beta}_3=[1,4]}$ and ${{\beta}_4=[0,3]}$. Then, the user sends to all servers the vectors ${Q}_1$ and ${Q}_{2}$ (associated with the super-messages $\hat{X}_1$ and $\hat{X}_2$), and the vectors ${{\beta}_1,\dots,{\beta}_{4}}$ (associated with the functions ${Y_1,\dots,Y_4}$). Using the coefficient vectors ${Q}_1$ and ${Q}_{2}$, each server first constructs the two super-messages ${\hat{X}_1=X_1+2X_2+4X_3+2X_4}$ and ${\hat{X}_2=2X_2+3X_3+X_4}$, and then constructs the functions ${Y_1,\dots,Y_4}$ using the super-messages $\hat{X}_1$ and $\hat{X}_2$ and the coefficient vectors ${{\beta}_1,\dots,{\beta}_{4}}$ as follows: 

\[
\begin{array}{lcl}
Y_1=4\hat{X}_1+2\hat{X}_2=4X_1+2X_2+2X_3\\
Y_2=3\hat{X}_1+\hat{X}_2=3X_1+3X_2+2X_4\\
Y_3=\hat{X}_1+4\hat{X}_2=X_1+X_3+X_4\\
Y_4=3\hat{X}_2=X_2+4X_3+3X_4\\   
\end{array}
\]

Finally, the user and the servers apply the PC scheme of~\cite{sun2018capacity} for two super-messages $\hat{X}_1$, $\hat{X}_2$ in order for the user to privately retrieve the function ${Y_1}$. It should be noted that among the functions $Y_1,\dots,Y_4$, only $Y_1$ is a linear combination of the messages $\{X_i\}_{i\in \mc{W}} = \{X_1,X_2,X_3\}$, and the user's demand, i.e., $2X_1+X_2+X_3$ is equal to $3Y_1$. The details of the PC scheme for this example are as follows. Let $\pi: [16]\rightarrow [16]$ be a randomly chosen permutation. Let $u_f(i)\triangleq\sigma_i Y_f(\pi(i))$ for $f \in [4]$ and ${i \in [16]}$, where $Y_f(\pi(i))$ is the $\pi(i)$-th $\mathbb{F}_5$-symbol of $Y_f$, and $\sigma_i$ is a randomly chosen element from $\{-1,+1\}$. For simplifying the notation, let $(a_i,b_i,c_i,d_i)=(u_1(i),u_2(i),u_3(i),u_4(i))$ for all ${i \in [16]}$. The user then queries $15$ carefully designed linear combinations of the symbols $\{\{a_i\}_{i\in [16]},\{b_i\}_{i\in [16]},\{c_i\}_{i\in [16]},\{d_i\}_{i\in [16]}\}$, as given in Table~\ref{table1} \cite{sun2018capacity}, from each of the servers (S1 and S2). 

As shown in~\cite{sun2018capacity}, among the $15$ symbols queried from S1 (or S2), $3$ symbols are redundant (based on the information obtained from S2 (or S1)). For example, consider the $15$ symbols queried from S1. (Similar observations can be made regarding the queries from S2.) Among the $4$ symbols $\{a_1,b_1,c_1,d_1\}$, any $2$ symbols suffice to recover the other $2$ symbols. For example, $c_1$ and $d_1$ can be obtained from $a_1$ and $b_1$. (Note that $Y_3$ and $Y_4$ can be written as a linear combination of $Y_1$ and $Y_2$.) Thus, the server S1 needs to send two arbitrary symbols from $\{a_1,b_1,c_1,d_1\}$. In addition, given any $2$ symbols from $\{a_2,b_2,c_2,d_2\}$, any $5$ symbols among the $6$ symbols $\{{a_3-b_2},{a_4-c_2},{a_5-d_2},{b_4-c_3},{b_5-d_3},{c_5-d_4}\}$ queried from S1 would suffice to recover the remaining symbol. For example, ${c_5-d_4}$ can be obtained from the symbols $\{a_3-b_2,{a_4-c_2},{a_5-d_2},{b_4-c_3},{b_5-d_3},b_2,d_2\}$ (for details, see~\cite[Section~5.1]{sun2018capacity}). 
Thus, each of the servers S1 and S2 needs to send to the user only $12$ symbols. In particular, S1 transmits $2$ arbitrary symbols from $\{a_1,b_1,c_1,d_1\}$, $5$ arbitrary symbols from $\{{a_3-b_2},{a_4-c_2},{a_5-d_2},{b_4-c_3},{b_5-d_3},{c_5-d_4}\}$, and the $4$ symbols $\{a_9-b_7+c_6,a_{10}-b_8+d_6,a_{11}-c_8+d_7,b_{11}-c_{10}+d_9\}$, and the symbol $\{a_{15}-b_{14}+c_{13}-d_{12}\}$; and S2 transmits $2$ arbitrary symbols from $\{a_2,b_2,c_2,d_2\}$, $5$ arbitrary symbols from $\{a_6-b_1,{a_7-c_1},{a_8-d_1},{b_7-c_6},{b_8-d_6},{c_8-d_7}\}$, and the $4$ symbols $\{a_{12}-b_4+c_3,a_{13}-b_5+d_3,a_{14}-c_5+d_4,b_{14}-c_{13}+d_{12}\}$, and the symbol $\{a_{16}-b_{11}+c_{10}-d_{9}\}$. \\

\begin{table}[t!]
    \caption{The queries of the PC protocol for $N=2$, $2$ super-messages, $F=4$, when the user demands ${Y_1}$\cite{sun2018capacity}.}
    \label{table1}
    \centering
    \scalebox{1.25}{
\begin{tabular}{ |c|c| } 
 \hline 
 S1 & S2 \\ 
 \hline
 $a_{1},b_{1},c_{1},d_{1}$ & $a_{2}, b_{2},c_{2},d_{2}$ \\ 
 \hline
 $a_{3}-b_{2}$ & $a_{6}-b_{1}$\\ 
 $a_{4}-c_{2}$ & $a_{7}-c_{1}$\\ 
 $a_{5}-d_{2}$ & $a_{8}-d_{1}$ \\ 
  $b_{4}-c_{3}$ & $b_{7}-c_{6}$\\ 
  $b_{5}-d_{3}$ & $b_{8}-d_{6}$\\ 
  $c_{5}-d_{4}$ & $c_{8}-d_{7}$\\ 
 \hline
  $a_{9}-b_{7}+c_{6}$ & $a_{12}-b_{4}+c_{3}$\\ 
  $a_{10}-b_{8}+d_{6}$ & $a_{13}-b_{5}+d_{3}$\\ 
  $a_{11}-c_{8}+d_{7}$ & $a_{14}-c_{5}+d_{4}$\\ 
  $b_{11}-c_{10}+d_{9}$ & $b_{14}-c_{13}+d_{12}$\\ 
  \hline
  $a_{15}-b_{14}+c_{13}-d_{12}$ & $a_{16}-b_{11}+c_{10}-d_{9}$\\ 
  \hline
\end{tabular}}
\end{table}

From the answers sent by the servers, the user obtains all $16$ symbols $a_1,\dots,a_{16}$, and accordingly, all $16$ symbols of $Y_1$. (Note that $a_i = u_1(i)=\sigma_i Y_1(\pi(i))$ for $i\in [16]$.) Then, the user can compute the desired linear combination, i.e., $2X_1+X_2+X_3$ by computing $3Y_1$. In order to retrieve $Y_1$ which consists of $16$ symbols (over $\mathbb{F}_5$), according to the proposed protocol, the user downloads $24$ symbols (over $\mathbb{F}_5$) from both servers. Thus, the rate of the proposed protocol is $16/24=2/3$. 

It should be noted that for every subset of size $3$ of the messages $\{X_i\}_{i\in [4]}$, in the proposed protocol, there exists one (and only one) linear combination (with non-zero coefficients) of these messages, namely $Y_{f^{*}}$ for some ${f^{*} \in [4]}$. Moreover, as a result of the privacy guarantees of the PC scheme, no server can infer any information about the index ($f^{*}$) of the function $Y_{f^{*}}$ being requested by the user. Thus, the proposed scheme satisfies the required joint privacy condition of the PLT problem.  

\begin{lemma}\label{lemma1}
The Modified GRS Code protocol is a PLT protocol, and achieves the rate $(\frac{1}{N},K-D+1)$.
\end{lemma}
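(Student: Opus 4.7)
The plan is to verify three things about the Modified GRS Code protocol: that it satisfies the recoverability condition, that it satisfies the joint privacy condition, and that it achieves the claimed rate $\Phi(1/N, K-D+1)$.

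For recoverability, the key observation is that the Specialized GRS construction produces a unique function $Y_{f^*}$, with $f^* \in [F]$, that is a scalar multiple of the demand $V_1 X_{\mc{W}} = \sum_{i \in \mc{W}} v_i X_i$. I would verify this by computing the coefficient of each $X_i$ in $\sum_{j=1}^{r} \beta^{1}_{f^*, j} \hat{X}_j$: for $i \in \mc{W}$ the coefficient equals $\alpha_i \cdot p(\omega_i) = v_i$ by the choice $\alpha_i = v_i/p(\omega_i)$, while for $i \not\in \mc{W}$ it vanishes because $p(\omega_i) = 0$ by construction of $p$. Hence $Y_{f^*} = \delta_{k_{f^*}} \cdot V_1 X_{\mc{W}}$ for the known scalar $\delta_{k_{f^*}}$. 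The PC scheme of~\cite{sun2018capacity}, applied with the $r = K-D+1$ super-messages $\{\hat{X}_i\}_{i\in [r]}$ playing the role of independent messages and $\{Y_f\}_{f\in [F]}$ playing the role of linear functions, enables the user to privately retrieve $Y_{f^*}$, from which the demand is obtained by dividing by $\delta_{k_{f^*}}$.

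For joint privacy, I would decompose each server's view into (a) the coefficient vectors $Q_1, \ldots, Q_r$ and $\beta_1, \ldots, \beta_F$ sent in Step~3, and (b) the PC query sent in Step~4, and show that their joint distribution is independent of $\mc{W}$. For (a), since $\mbf{V}$ is uniform over $\mbb{V}$ (for $L=1$ this forces every entry of $V$ to lie in $\mbb{F}_q^{\times}$ and to be uniformly distributed there), each coefficient $\alpha_j = v_j/p(\omega_j)$ for $j \in \mc{W}$ is uniform over $\mbb{F}_q^{\times}$. Combined with the fact that $\alpha_j$ for $j \not\in \mc{W}$ is sampled independently and uniformly from $\mbb{F}_q^{\times}$ by construction, the tuple $(\alpha_1, \ldots, \alpha_K)$, and hence $(Q_1, \ldots, Q_r)$, has a marginal distribution that does not depend on $\mc{W}$. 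The vectors $\beta_f$ are then computed from the $Q_i$'s via a randomized rule that does not itself depend on $\mc{W}$. For (b), the privacy guarantee of the PC scheme of~\cite{sun2018capacity} ensures that $f^*$ is perfectly hidden from each server. Composing these, the posterior distribution of $\mc{W}$ given the server's view coincides with its prior, which is exactly the required joint privacy condition.

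For the rate, each message consists of $S = N^F$ symbols from $\mbb{F}_q$, so each function $Y_f$ also has $S$ symbols. The PC scheme applied to $r = K-D+1$ messages on $N$ servers requires each server to transmit $S(1/N + 1/N^2 + \cdots + 1/N^r)$ symbols, for a total download of $S(1 + 1/N + \cdots + 1/N^{K-D})$ symbols across all $N$ servers. Since the demand $V_1 X_{\mc{W}}$ contains $S$ symbols, the rate is $(1 + 1/N + \cdots + 1/N^{K-D})^{-1} = \Phi(1/N, K-D+1)$, matching the converse. The hardest step I anticipate is the privacy argument: one must carefully track how the randomness of $\mbf{V}$ interacts with the deterministic assignment $\alpha_j = v_j/p(\omega_j)$ for $j \in \mc{W}$, show that marginalizing over $\mbf{V}$ yields a $\mc{W}$-independent distribution on $(\alpha_1, \ldots, \alpha_K)$, and then compose this with the PC scheme's privacy in a way that precludes any cross-leakage between the two stages.
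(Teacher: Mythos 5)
Your proposal is correct and follows essentially the same route as the paper's proof: recoverability and the rate $\Phi(1/N,K-D+1)$ come from applying the PC scheme of~\cite{sun2018capacity} to the $r=K-D+1$ super-messages, and joint privacy comes from the PC privacy guarantee combined with the one-to-one correspondence between $D$-subsets of $\mc{K}$ and the functions $Y_1,\dots,Y_F$. Your explicit verification that the marginal distribution of $(\alpha_1,\dots,\alpha_K)$ (and hence of the Step-3 query data) is independent of $\mc{W}$, and that $Y_{f^*}$ is a known scalar multiple of the demand, fills in details the paper leaves implicit but does not change the argument.
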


\begin{proof}
Since the messages $\mathbf{X}_{[K]}$ are uniformly and independently distributed over $\mbb{F}^{S}_{q}$, and
$\{\hat{X}_1,\dots,\hat{X}_r\}$ are linearly independent combinations of the messages in $X_{[K]}$, thus $\{\hat{\mathbf{X}}_1,\dots,\hat{\mathbf{X}}_r\}$ are uniformly and independently distributed over $\mbb{F}^{S}_{q}$ as well, i.e., ${H(\hat{\mathbf{X}}_1) = \dots = H(\hat{\mathbf{X}}_r) = S\log q=B}$. Hence, the rate of the Modified GRS Code protocol is the same as the rate of the PC protocol for $N$ servers and ${K-D+1}$ messages, which is given by ${\Phi(\frac{1}{N},K-D+1)}$ (see \cite[Theorem~1]{sun2018capacity}).

From the step $4$ of the Modified GRS Code protocol, it is evident that the recoverability condition is satisfied. For the joint privacy of the proposed protocol, the proof is as follows. The PC protocol protects the privacy of the function requested by the user (i.e., no server can infer any information about the index of the function requested by the user upon receiving the query). Consider an arbitrary server $n\in [N]$, which receives an arbitrary query ${Q^{[\mc{W},V]}_n}$, generated by the proposed protocol. Given ${\mathbf{Q}^{[\bc{W},\mbf{V}]}_n = Q^{[\mc{W},V]}_n}$, from the perspective of server $n$, every function ${Y_f}$ for ${f \in [F]}$, is equally likely to be the user's desired function. We denote the support of $Y_f$ by $\mathcal{Y}_f$, i.e., $\mathcal{Y}_f$ is the set of all indices $i\in [K]$ such that $X_i$ has a non-zero coefficient in the linear combination $Y_f$. Note that for any $\mc{\tilde{W}} \in \mbb{W}$, in the proposed protocol, there exists only one function $Y_{f^{*}}$  among $Y_1,\dots,Y_F$ with $\mc{Y}_{f^{*}}=\mc{\tilde{W}}$. Thus, for any $\mc{\tilde{W}} \in \mbb{W}$ and every $n \in [N]$, the following holds 
\[
\mathbb{P}(\bc{W}= \mc{\tilde{W}}| \mbf{Q}_n^{[\bc{W},\mbf{V}]}= {Q}_{n}^{[\mc{W},{V}]}) = \Pr(\bc{W}= \mc{Y}_{f^{*}}|\mbf{Q}^{[{\bc{W}},\mbf{V}]}_n = Q_n) = \frac{1}{F}= \frac{1}{\binom{K}{D}}=\mathbb{P}(\bc{W}= \mc{\tilde{W}}).
\] This completes the proof.
\end{proof}

\bibliographystyle{IEEEtran}
\bibliography{PLT}

\end{document}